\newsavebox\myboxA
\newsavebox\myboxB
\newlength\mylenA
\newcommand*\xoverline[2][0.70]{%
    \sbox{\myboxA}{$\m@th#2$}%
    \setbox\myboxB\null% Phantom box
    \ht\myboxB=\ht\myboxA%
    \dp\myboxB=\dp\myboxA%
    \wd\myboxB=#1\wd\myboxA% Scale phantom
    \sbox\myboxB{$\m@th\overline{\copy\myboxB}$}%  Overlined phantom
    \setlength\mylenA{\the\wd\myboxA}%   calc width diff
    \addtolength\mylenA{-\the\wd\myboxB}%
    \ifdim\wd\myboxB<\wd\myboxA%
       \rlap{\hskip 0.5\mylenA\usebox\myboxB}{\usebox\myboxA}%
    \else
        \hskip -0.5\mylenA\rlap{\usebox\myboxA}{\hskip 0.5\mylenA\usebox\myboxB}%
    \fi}
\newcommand\blfootnote[1]{%
	\begingroup
	\renewcommand\thefootnote{}\footnote{#1}%
	\addtocounter{footnote}{-1}%
	\endgroup}
\renewcommand\AB@affilsepx{ \protect\Affilfont} % typesets affiliations next to each other
\def\BState{\State\hskip-\ALG@thistlm}
\theoremstyle{remark}
\newtheorem{theorem}{\textbf{Theorem}}
\renewenvironment{proof}{{\textbf{Proof.}}}{}
\begin{document}

	\title{Bit-Interleaved Coded Multiple Beamforming in Millimeter-Wave Massive MIMO Systems }

	\small
	\author{Sadjad Sedighi, \textit{Student Member, IEEE,}}
	\author{Ender Ayanoglu, \textit{Fellow, IEEE}}
	\affil{CPCC, Dept of EECS, UC Irvine, Irvine, CA, USA,}
	\normalsize

	% make the title area
	\maketitle
	\begin{abstract}

 An analysis of the diversity gain for bit-interleaved coded multiple beamforming (BICMB) method in millimeter-wave (mm-Wave) massive multiple-input multiple-output (MIMO) systems is carried out for both the single-user and multi-user scenario. We show that the diversity gain is independent of the number of data streams and full spatial multiplexing order can be achieved in both scenarios. Also, we show that the diversity gain in the multi-user scenario is independent of the number of users in the system and only depends on the number of the remote antenna units (RAUs) at the transmitter side, when each user has only one RAU. The assumption here is that the channel state information (CSI) is known at both sides of the transmitter and the receiver and the number of antennas in each RAU goes to infinity. This latter assumption can be relaxed by a large number of antennas in each RAU, similar to the case for all massive MIMO research. Based on these assumptions, the diversity gain for the single-user scenario is $\frac{\left(\sum_{i,j}\beta_{ij}\right)^2}{\sum_{i,j}\beta_{ij}^2L_{ij}^{-1}}$  where $L_{ij}$ is the number of propagation paths and $\beta_{ij}$ is the large scale fading coefficient between the $i$th RAU in the transmitter and the $j$th RAU in the receiver. The diversity gain in the multi-user scenario for the $k$-th user is $\frac{M^2}{\sum_{j}L_{kj}^{-1}}$ where $M$ represents the number of RAUs at the transmitter. Simulation results show that when the perfect channel state information assumption is satisfied, the use of BICMB results in the diversity gain values predicted by the analysis.
	\end{abstract}
\vfill
	\blfootnote{This work was partially supported by NSF under Grant No. 1547155. This work was partially presented during the IEEE International Conference on Communications, May 21-23, 2019.}
	\pagebreak
	\section{Introduction}
Millimeter-wave (mm-Wave) massive multiple-input multiple-output (MIMO) will likely become an important part of the Fifth Generation (5G) communication systems. It enables us to increase the data rates and help in accommodating the billions of wireless devices whose numbers increase exponentially each year \cite{Rappaport2013,Swindle2014,Roh2014}. Despite of its substantial gains, mm-Wave massive MIMO brings challenges. Severe penetration loss and path loss in the mm-Wave signals comparing to signals in former and current cellular systems (e.g., 3G or LTE) are two of the challenges \cite{Haidar2014}.

{\color{black}
 One of the advantages of the mm-Wave frequencies is that they enable one to pack more antennas in the same area compared to a lower range of frequencies. This leads to highly directional beamforming and large-scale spatial multiplexing\footnote{In this paper, the terms "spatial multiplexing" or "spatial multiplexing
order" are used as in \cite{Paulraj2003} to describe the number of spatial subchannels. Note that this term is different from "spatial multiplexing gain" defined in \cite{Zheng2003}} in mm-Wave frequencies. The principles of beamforming are independent of the carrier frequency, but it is not practical to use fully digital beamforming schemes for massive MIMO systems \cite{Telatar1999,Shi2011,Dahrouj2010,Peel2005}. Power consumption and cost perspectives are the main obstacles due to the high number of radio frequency (RF) chains required for the fully digital beamforming, i.e., one RF chain per antenna element \cite{Doan2004}.} To address this problem, hybrid analog-digital processing of the precoder and combiner for mm-Wave communications systems is being considered \cite{Sohrabi2016,Ayach2012,Ayach2013,Weiheng2016,Singh2014,Zhang2015,XWu2018}, where \cite{XWu2018} proposes an algorithm to calculate the beamforming matrices in a closed form.

	Bit-interleaved coded modulation (BICM) was introduced as a way to increase the code diversity \cite{Caire1998,Zehavi1992}.  As stated in \cite{Akay2007}, bit-interleaved coded multiple beamforming (BICMB) has a substantial impact on the diversity gain performance of a MIMO system. Recently, the diversity gain in mm-Wave for both co-located and distributed systems is studied in \cite{Dian2018}. The authors of \cite{Dian2018} have shown that increasing the number of remote antenna units (RAUs) in the distributed system increases the diversity gain. In \cite{Sedighi2019}, using BICMB to increase the diversity gain is investigated for the single-user scenario. In this work, we extend the method we used in \cite{Sedighi2019} to a multi-user scenario where multiple users are being served with a BS. In Sections II and III, BICMB for the system under consideration is analyzed. We show that by using BICM in the system, one can achieve full spatial multiplexing without any loss in the diversity gain. That is, in Section III, we show that BICMB achieves full diversity order of $M_rM_tL$ and full spatial multiplexing order of $M_rM_tL$ in a special case when the number of propagation paths is constant for all paths between RAUs, i.e., $L_{ij}=L$ over the mm-Wave channel. We provide design criteria for the interleaver that guarantee full diversity and full spatial multiplexing. In Section IV, the system model and channel model are introduced. In this section, a hybrid beamforming method \cite{XWu2018} is used to eliminate the inter-user interference and to maximize the achievable rate. The difference of the method in \cite{XWu2018} with the other methods is its closed form where the precoder and combiner matrices can be calculated explicitly without any need for iteration. In Section V we use pairwise error probability (PEP) for convolutional coding in BICMB to find an upper bound for error probability.  Then we show that BICMB achieves full diversity order of $ML$ in a special case when the number of propagation paths is constant for all paths between RAUs, i.e., $L_{ij}=L$ over the mm-Wave channel. We provide design criteria for the interleaver that guarantee full diversity and full spatial multiplexing. Simulation results are provided in Section VI. Finally conclusion s are presented in Section VII.

\iffalse
By using BICMB, we showed in \cite{Sedighi2019} that for a single-user in mm-Wave massive MIMO systems, the diversity gain can be increased. However, the mm-Wave massive MIMO systems are built for multiple users. Thus, it can be expected to use the same idea to increase the diversity gain for a multi-user scenario.
\fi
We would like to emphasize that the asymptotical diversity analysis obtained in this paper is under the idealistic assumption of having perfect channel state information both at the transmitter and the receiver as done in similar works.

	\textit{Notation:}
Boldface upper and lower case letters denote matrices and column vectors, respectively.	The minimum Hamming distance
	between any two codewords in a convolutional code is defined as the free distance $d_{\text{free}}$. The symbol $N_s$ denotes
	the total number of symbols transmitted at a time.
	The minimum Euclidean distance between the two constellation points is given by $d_{\text{min}}$. The symbols $(.)^H, (.)^T, (.)^*,(\bar{.})$ and $\forall$ denote the Hermitian, transpose, complex conjugate, binary complement, and for all, respectively. $\mathcal{CN} (0, 1)$ denotes a circularly symmetric complex Gaussian random distribution with zero mean and unit variance. The expectation operator is denoted by $\mathbb{E}\left[.\right]$.  $[\mathbf{A}]_{ij}$ or $\mathbf{A}(i,j)$ gives the $(i, j)$-th entry of matrix $\mathbf{A}$. $\mathbf{A}(i,:)$ and $\mathbf{A}(:,j)$ represent the $i$-th row of the matrix $\mathbf{A}$ and $j$-th column of the matrix $\mathbf{A}$, respectively. Finally, diag$\{a_1 , a_2 , \dots , a_N \}$ stands for a diagonal matrix with diagonal elements $\{a_1 , a_2 ,\dots , a_N \}$.

	\section{System Model For Single-User Scenario}
	Consider a downlink single-user mm-Wave massive MIMO system as shown in Fig. \ref{SU_FC}. In this system, the transmitter sends $N_s$ data streams to a receiver. The transmitter is equipped with  $N_t^{RF}$ RF chains and $M_t$ RAUs, where each RAU has $N_t$ antennas, while at the receiver, the number of RF chains and RAUs is given by $N_r^{RF}$ and $M_r$, respectively. Each RAU at the receiver has $N_r$ antennas. When $M_t = M_r = 1$, the system reduces to a conventional co-located MIMO (C-MIMO) system.

	\begin{figure}[!t]
		\centering
        \includegraphics[width=0.8\textwidth]{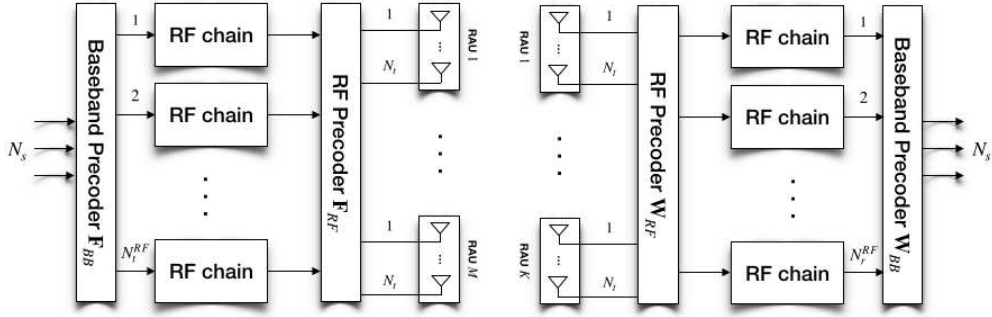}
		\caption{Block diagram of a mm-Wave massive MIMO system with distributed antenna arrays.}
		\label{SU_FC}
	\end{figure}
	
	The input to the system is $N_s$ data streams. The vector of data symbols to be transmitted by the transmitter at each time instant, $\mathbf{x} \in \mathbb{C}^{N_s\times1}$, can be expressed as
	\begin{align}
	\mathbf{x}=\left[x_1,...,x_{N_s}\right]^T,
		\end{align}
		where $E\left[\mathbf{x}\mathbf{x}^H\right]=\mathbf{I}_{N_s}$. The preprocessing at the baseband is applied by means of the matrix $\mathbf{F}_{\text{BB}} \in \mathbb{C}^{N_t^{RF} \times N_s}$. The last stage of data preprocessing is performed at RF, when beamforming is applied by means of phase shifters and combiners. A set of $M_tN_t$ phase shifters is applied to the output of each RF chain. As a result of this process, different beams are formed in order to transmit the RF signals. We can model this process with an $M_t N_t \times N_t^{RF} $ complex matrix $\mathbf{F}_{\text{RF}}$. Note that the baseband precoder $\mathbf{F}_{\text{BB}}$ modifies both amplitude and phases, while only phase changes can be realized by $\mathbf{F}_{\text{RF}}$ since it is implemented by using analog phase shifters.
		
We assume a narrowband flat fading channel model and obtain the received signal	as
	\begin{align}
	\mathbf{z}=\mathbf{H}\mathbf{F}_{\text{RF}}\mathbf{F}_{\text{BB}}\mathbf{x}+\mathbf{n},
	\end{align}
	where $\mathbf{H}$ is an $M_rN_r\times M_tN_t$ channel matrix with complex-valued entries and $\mathbf{n}$ is an $M_rN_r\times1$ vector consisting of i.i.d. $\mathcal{CN}(0,N_0)$ noise samples, where $N_0=\frac{N_t}{SNR}$. The processed signal is given by
	\begin{align}\label{rec_sig}
	\mathbf{y}=\mathbf{W}_{\text{BB}}^H\mathbf{W}_{\text{RF}}^H\mathbf{H}\mathbf{F}_{\text{RF}}\mathbf{F}_{\text{BB}}\mathbf{x}+\mathbf{W}_{\text{BB}}^H\mathbf{W}_{\text{RF}}^H\mathbf{n},
	\end{align}
	where $\mathbf{W}_{\text{RF}}$ is the $M_rN_r\times N_r^{RF}$ RF combining matrix, and $\mathbf{W}_{\text{BB}}$ is the $N_r^{(\text{RF})}\times N_s$ baseband combining matrix.
	
	The channel matrix $\mathbf{H}$ can also be written as
	\begin{align}
	\mathbf{H}=
	\begin{bmatrix}
	\sqrt{\beta_{11}}\mathbf{H}_{11}&\sqrt{\beta_{12}}\mathbf{H}_{12} & \dots& \sqrt{\beta_{1M_t}}\mathbf{H}_{1M_t} \\
	\sqrt{\beta_{21}}\mathbf{H}_{21}&\sqrt{\beta_{22}}\mathbf{H}_{22} & \dots& \sqrt{\beta_{2M_t}}\mathbf{H}_{2M_t} \\
	\vdots & \vdots & \ddots & \vdots  \\
	\sqrt{\beta_{M_r1}}\mathbf{H}_{M_r1} & \sqrt{\beta_{M_r2}}\mathbf{H}_{M_r2} & \dots & \sqrt{\beta_{M_rM_t}}\mathbf{H}_{M_rM_t}
	\end{bmatrix},
	\end{align}
	where $\beta_{ij}$, a real-valued nonnegative number, represents the large-scale fading effect between the $i$th RAU at the receiver and $j$th RAU at the transmitter. The normalized subchannel matrix $\mathbf{H}_{ij}$ is the MIMO channel between the $i$th RAU at the receiver and the $j$th RAU at the transmitter.
	
	Analytical channel models such as Rayleigh fading are not suitable for mm-Wave channel modeling. The reason for this is the fact that the scattering levels represented by these models are too rich for mm-Wave channels\cite{Ayach2012}. In this paper, the model is based on the Saleh-Valenzuela model that is often used in mm-Wave channel modeling \cite{HXu2002} and standardization \cite{Standard}. For simplicity, each scattering cluster is assumed to contribute a single propagation path. The subchannel matrix $\mathbf{H}_{ij}$ is given by
	\begin{align}\label{Hij}
	\mathbf{H}_{ij}=\sqrt{\frac{N_tN_r}{L_{ij}}} \sum_{l=1}^{L_{ij}}\alpha_{ij}^l\mathbf{a}_r(\theta_{ij}^{l}) \mathbf{a}_t^H(\phi_{ij}^{l}),
	\end{align}
	where $L_{ij}$ is the number of propagation paths and $\alpha_{ij}^{l}$ is the complex-gain of the $l$th ray which follows $\mathcal{CN}(0,1)$, the vectors $\mathbf{a}_r(\theta_{ij}^{l})$ and $ \mathbf{a}_t(\phi_{ij}^{l})$ are the normalized receive/transmit array response and $\theta_{ij}^{l} $ and $\phi_{ij}^{l}$ are its random azimuth angles of arrival and departure respectively.
	
	The uniform linear array (ULA) is employed by the transmitter and receiver in our study. For an $N$-element ULA, the array response vector is given by
	\begin{align}
	\mathbf{a}_{ULA}(\phi)=\frac{1}{\sqrt{N}}\left[1, e^{j\frac{2\pi}{\lambda}dsin(\phi)},\dots, e^{j(N-1)\frac{2\pi}{\lambda}dsin(\phi)}\right]^T,
	\end{align}
	where $\lambda$ is the wavelength of the carrier, and $d$ is the distance between neighboring antenna elements.

	We leverage both BICM and multiple beamforming to form BICMB \cite{Akay2007}.  An interleaver is used to interleave the output bits of a binary convolutional encoder. Then the output of the interleaver is mapped over a signal set $\chi \subset \mathbb{C}$ of size $|\chi|=2^m$ with a binary labeling map $\mu:\{0,1\}^m\rightarrow \chi$. The interleaver design has two criteria\cite{Akay2007}:
	\begin{enumerate}
		\item Consecutive coded bits are mapped to different symbols.
			\item Each subchannel should be utilized at least once within $d_{\text{free}}$ distinct bits among different codewords by using proper code and interleaver.
	\end{enumerate}
	Note that the free distance $d_{\text{free}}$ of the convolutional encoder should satisfy $d_{\text{free}}\geq N_s$ \cite{Akay2007}.

For mapping the bits onto symbols, Gray encoding is used. Also, we are using a Viterbi decoder at the receiver. The interleaver $\pi$ is used to interleave the code sequence $\underline{c}$. Then the output of the interleaver is mapped onto the signal sequence $\underline{x} \in \chi$.

The only beamforming constraint here is a total power constraint, because one can control both the amplitude and the phase of a signal. The total power constraint leads to a simple solution based on Singular Value Decomposition (SVD) \cite{Ayach2012}
	\begin{align}
	\mathbf{H}=\mathbf{U\Lambda V}^H=\left[\mathbf{u}_1 \mathbf{u}_2 \dots \mathbf{u}_{M_rN_r}\right]^H \mathbf{H} \left[\mathbf{v}_1 \mathbf{v}_2 \dots \mathbf{v}_{M_tN_t}\right],
	\end{align}
	where $\mathbf{U}$ and $\mathbf{V}$ are $M_rN_r\times M_rN_r$ and $M_tN_t\times M_tN_t$ unitary matrices, respectively, and $\mathbf{\Lambda}$ is an $ M_rN_r \times M_tN_t$ diagonal matrix with singular values of $\mathbf{H}$, $\lambda_i \in \mathbb{R}$, on the main diagonal with decreasing order.
By exploiting the optimal precoder and combiner, the system input-output relation in (\ref{rec_sig}) at the $k$th time instant can be written as
	\begin{align}
	\mathbf{y}_k=\left[\mathbf{u}_1 \mathbf{u}_2 \dots \mathbf{u}_{N_s}\right]^H \mathbf{H} \left[\mathbf{v}_1 \mathbf{v}_2 \dots \mathbf{v}_{N_s}\right] \mathbf{x}+ \left[\mathbf{u}_1 \mathbf{u}_2 \dots \mathbf{u}_{N_s}\right]^H\mathbf{n}_k,
	\end{align}
	\begin{align}\label{def_rec}
	y_{k,s}=\lambda_sx_{k,s} +n_{k,s}, \text{   \quad for } s=1,2,\dots,N_s.
	\end{align}
	
	\section{Diversity Gain and PEP Analysis For Single-User Scenario}
	In this section, we show that by using the BICMB analysis for calculating BER, the diversity gain becomes independent of the number of data streams.
	{\color{black}
		\begin{theorem}
			Suppose that $N_r\rightarrow \infty$ and $N_t \rightarrow \infty$. Then the bit interleaved coded distributed massive MIMO system can achieve a diversity gain of
		\begin{align}\label{su_dg}
		D_G=\frac{\left(\sum_{i,j}\beta_{ij}\right)^2}{\sum_{i,j}\beta_{ij}^2L_{ij}^{-1}}
		\end{align}
			$\text{ for } i=1,\dots,M_r \text{ and } j=1,\dots,M_t$.
	\end{theorem}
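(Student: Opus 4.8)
The plan is to combine the large-array asymptotics with the bit-interleaved PEP bound, and then to read off the diversity slope from a second-order (Gamma) characterization of the total channel gain.

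First I would make the channel decoupling precise. As $N_t,N_r\to\infty$, distinct ULA steering vectors become asymptotically orthonormal, so that $\mathbf{a}_r^H(\theta_{ij}^{l})\mathbf{a}_r(\theta_{ij}^{l'})\to\delta_{ll'}$ and likewise for $\mathbf{a}_t$. Substituting into (\ref{Hij}) shows that the nonzero squared singular values of $\sqrt{\beta_{ij}}\,\mathbf{H}_{ij}$ converge to $\beta_{ij}\frac{N_tN_r}{L_{ij}}|\alpha_{ij}^{l}|^2$, one per propagation path. Hence $\mathbf{H}$ splits into $N_s=\sum_{i,j}L_{ij}$ parallel eigen-subchannels (the full-multiplexing case), and the sum of all squared singular values equals the squared Frobenius norm, $\sum_s\lambda_s^2=\|\mathbf{H}\|_F^2=\sum_{i,j}\beta_{ij}\frac{N_tN_r}{L_{ij}}\sum_{l=1}^{L_{ij}}|\alpha_{ij}^{l}|^2$.

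Next I would invoke the BICMB error analysis. The Chernoff-bounded conditional PEP for an error event of Hamming weight $d$ is proportional to $\exp(-\frac{d_{\text{min}}^2}{4N_0}\sum_{k=1}^d\lambda_{s_k}^2)$, where $s_k$ indexes the subchannel carrying the $k$-th differing bit. Interleaver criterion~2 guarantees that within $d_{\text{free}}\ge N_s$ bits each of the $N_s$ eigen-subchannels is used at least once, so for the dominant (minimum-distance) event the exponent is governed by $\sum_{s=1}^{N_s}\lambda_s^2=\|\mathbf{H}\|_F^2$; this additive, MRC-like combining across all eigenmodes is precisely what the interleaver buys us. Writing $N_0=N_t/\mathrm{SNR}$, the fading-averaged BER is then bounded by $\mathbb{E}_{\alpha}[\exp(-\rho\,\gamma)]$ with $\rho\propto\mathrm{SNR}$ and the normalized gain $\gamma\coloneqq\|\mathbf{H}\|_F^2/(N_tN_r)=\sum_{i,j}\frac{\beta_{ij}}{L_{ij}}\sum_{l}|\alpha_{ij}^{l}|^2$.

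Finally I would extract the diversity order. Since each $\alpha_{ij}^{l}\sim\mathcal{CN}(0,1)$, the inner sum $\sum_{l}|\alpha_{ij}^{l}|^2$ is Gamma-distributed with shape $L_{ij}$, giving $\mathbb{E}[\gamma]=\sum_{i,j}\beta_{ij}$ and $\mathrm{Var}[\gamma]=\sum_{i,j}\beta_{ij}^2L_{ij}^{-1}$. Approximating the composite gain $\gamma$ by a single Gamma variable matched to these two moments, its shape parameter is $(\mathbb{E}[\gamma])^2/\mathrm{Var}[\gamma]$, and for such a gain the averaged bound $\mathbb{E}_{\alpha}[\exp(-\rho\gamma)]$ decays as $\mathrm{SNR}^{-(\mathbb{E}[\gamma])^2/\mathrm{Var}[\gamma]}$. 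The diversity gain, read as the negative SNR-exponent of the BER, is therefore $D_G=(\sum_{i,j}\beta_{ij})^2/(\sum_{i,j}\beta_{ij}^2L_{ij}^{-1})$, which specializes to $M_rM_tL$ when $L_{ij}=L$.

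I expect the main obstacle to lie in the two approximations rather than in the algebra. The first is making the large-array decoupling rigorous: one must control the vanishing but nonzero cross-terms $\mathbf{a}^H(\cdot)\mathbf{a}(\cdot)$ among distinct paths and across the $M_rM_t$ blocks, so that the singular values concentrate on the stated per-path values uniformly. The second, and more delicate, is justifying the single-Gamma moment match, since the exact moment generating function of $\gamma$ is a product of per-block Gamma factors that would instead yield the full order $\sum_{i,j}L_{ij}$; the weighted expression in (\ref{su_dg}) is thus the \emph{effective} diversity captured by the second-order approximation, and I would lean on the simulation results to confirm that this is the slope observed in practice.
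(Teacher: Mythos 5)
Your proposal follows essentially the same route as the paper's proof: the large-array convergence of the per-path singular values (the paper cites Theorem 3 of its hybrid-precoding reference), the Chernoff-bounded BICMB PEP in which the interleaver criteria force every eigen-subchannel to contribute (the paper's $\alpha_{\text{min}}$ argument), and then a single-Gamma moment match for $\|\mathbf{H}\|_F^2$ --- your shape parameter $\left(\mathbb{E}[\gamma]\right)^2/\mathrm{Var}[\gamma]$ is exactly the Welch--Satterthwaite equation the paper invokes --- whose MGF yields the $\mathrm{SNR}^{-k}$ decay and hence (\ref{su_dg}). Your closing caveat is also a fair reading of the paper's own argument: the exact MGF of the channel gain is a product of per-block Gamma factors with true high-SNR exponent $\sum_{i,j}L_{ij}$, so the stated $D_G$ is the effective diversity produced by the same second-order (Welch--Satterthwaite) approximation the paper relies on, not an exact exponent.
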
}

	\begin{proof}
			We model the BICMB bit interleaver as $\pi: k' \rightarrow (k,s,i)$, where $k'$ represents the original ordering of the coded bits $c_{k'}$, $k$ represents the time ordering of the signals $x_{k,s}$ and $i$ denotes the position of the bit $c_{k'}$ on symbol $x_{k,s}$.
			
		We define $\chi_b^i$ as the subset of all signals $x\in\chi$. Note that the label has the value $b\in\{0,1\}$ in position $i$.
		
		Then, the ML bit metrics are given by using (\ref{def_rec}), \cite{Akay2007,Caire1998,Zehavi1992}
		\begin{align}\label{ML_bit}
		\gamma^i(y_{k,s},c_{k'})=\min_{x \in \chi^i_{c_{k'}}} \left|y_{k,s}-\lambda_sx\right|^2.
		\end{align}
		
		The receiver uses an ML decoder to make decisions based on
		\begin{align}\label{ML_dec}
		\mathbf{\underline{\hat{{c}}}}=\text{arg}\min_{\mathbf{\underline{c}}\in\mathcal{C}}\sum_{k'}\gamma^i(y_{k,s},c_{k'}).
		\end{align}
		
		Assume that the code sequence $\underline{\text{c}}$ is transmitted and $\underline{\hat{\text{c}}}$ is detected. Then by using (\ref{ML_bit}) and (\ref{ML_dec}), the pairwise error probability (PEP) of $\underline{\text{c}}$ and $\underline{\hat{\text{c}}}$ given channel state information (CSI) can be written as \cite{Akay2007}
		{\color{black}
			\begin{align}
			P(\underline{\text{c}}\rightarrow \underline{\hat{\text{c}}}|\mathbf{H})
			=P\left(\sum_{k'} \min_{x \in \chi^i_{c_{k'}}} |y_{k,s}-\lambda_sx|^2 \geq \sum_{k'} \min_{x \in \chi^i_{\hat{c}_{k'}}} |y_{k,s}-\lambda_sx|^2\right),
			\end{align}}
		where $s \in \{1,2,\dots,N_s\}$.
	
		Note that in a convolutional code, the Hamming distance between $\underline{\text{c}}$ and $\hat{\underline{\text{c}}}$, $d(\underline{\text{c}}-\underline{\hat{\text{c}}})$ is at least $d_{\text{free}}$. In this work we assume for PEP analysis $d(\underline{\text{c}}-\underline{\hat{\text{c}}})=d_{\text{free}}$.
		
		For the $d_{\text{free}}$ bits, let us denote
		\begin{align}
		\tilde{x}_{k,s}=\text{arg}\min_{x \in \chi^i_{{c}_{k'}}}\left|y_{k,s}-\lambda_sx\right|^2\\
		\hat{x}_{k,s}=\text{arg}\min_{x \in \chi^i_{\bar{c}_{k'}}}\left|y_{k,s}-\lambda_sx\right|^2
		\end{align}
		
		By using the trellis structure of the convolutional codes \cite{Akay2007}, one can write
		\begin{align}
		P(\underline{\text{c}}\rightarrow\underline{\hat{\text{c}}}|\mathbf{H})\leq Q\left(\sqrt{\frac{d_{\text{min}}^2\sum_{s=1}^{N_s}\alpha_s\lambda_s^2}{2N_0}}\right)
		\end{align}
		where $\alpha_s$ is a parameter that indicates how many times subchannel $s$ is used within the $d_{\text{free}}$ bits under consideration, and  $\sum_{s=1}^{N_s}\alpha_s=d_{\text{free}}$.
		The bound $Q(x)\leq\frac{1}{2}e^{-\frac{x}{2}}$ can be used to upper bound the PEP as
		\begin{align}\label{PEP_exp}
		P(\underline{\text{c}}\rightarrow\underline{\hat{\text{c}}})= \mathit{E}\left[P(\underline{\text{c}}\rightarrow\underline{\hat{\text{c}}}|\mathbf{H})\right]
		\leq \mathit{E}\left[\frac{1}{2}\text{exp}\left(\frac{-d_{\text{min}}^2\sum_{s=1}^{N_s}\alpha_s\lambda_s^2}{4N_0}\right)\right].
		\end{align}
		
		Let us denote $\alpha_{\text{min}} =\text{min }\{\alpha_s :s=1,2,\dots,N_s\}$. Then
		\begin{align}\label{sing_values}
		\frac{\left(\sum_{s=1}^{N_s}\alpha_s\lambda_s^2\right)}{N_s}\geq \frac{\left(\alpha_{\text{min}}\sum_{s=1}^{N_s}\lambda_s^2\right)}{N_s} \geq \frac{\left(\alpha_{\text{min}}\sum_{s=1}^{L_t}\lambda_s^2\right)}{L_t}.
		\end{align}
		There are only $L_t$ non-zero singular values \cite{Dian2018}.
				
		Let us define
		\begin{align} \label{theta}
		\Theta \triangleq \sum_{s=1}^{N_s}\lambda_s^2=||\mathbf{H}||_F^2=\sum_{i=1}^{M_r}\sum_{j=1}^{M_t}\beta_{ij}||\mathbf{H}_{ij}||_F^2.
		\end{align}
		
		Theorem 3 in \cite{Ayach2012} implies that the singular values of $\mathbf{H}_{ij}$ converge to $\sqrt{\frac{N_rN_t} {L_{ij}}}\left|\alpha_{l}^{ij}\right|$ in descending order. By using the singular values of $\mathbf{H}_{ij}$, (\ref{theta}) can be rewritten as

\begin{align}\label{Psi_def}
		\Theta=\sum_{i=1}^{M_r}\sum_{j=1}^{M_t}\beta_{ij}||\mathbf{H}_{ij}||_F^2= N_rN_t\sum_{i=1}^{M_r}\sum_{j=1}^{M_t}\underbrace{\frac{\beta_{ij}}{L_{ij}}\sum_{l=1}^{L_{ij}}\left|\alpha_{ij}^l\right|^2}_{\Psi_{ij}}.
\end{align}
		
		Note that the random vairable $\sum_{l=1}^{L_{ij}}\left|\alpha_{ij}^l \right|$ has a $\chi$-squared distribution with $2L_{ij}$ degrees of freedom, or equivalently a Gamma distribution with shape $L_{ij}$ and scale 2, denoted $\mathcal{G}(L_{ij},2)$. Then, since $\beta_{ij} L_{ij}^{-1}>0$, $\Psi_{ij} \sim \mathcal{G}(L_{ij},2\beta_{ij}L_{ij}^{-1})$ \cite{Hogg1978}. One can use the Welch-Satterthwaite equation to calculate an approximation to the degrees of freedom of $\Theta$ (i.e., shape and scale of the Gamma distribution) which is a linear combination of the independent random variables $\Psi_{ij}$ \cite[p.4.1-1]{Satterth1946},\cite{Massey}
\begin{align}\label{shape}
		k&=\frac{\left(\sum_{i,j}\theta_{ij}k_{ij}\right)^2}{\sum_{i,j}\theta_{ij}^2k_{ij}}=\frac{\left(\sum_{i,j}\beta_{ij}\right)^2}{\sum_{i,j}\beta_{ij}^2L_{ij}^{-1}},\\\label{scale}
		\theta&=\frac{\sum_{i,j}\theta_{ij}^2k_{ij}}{\sum_{i,j}\theta_{ij}k_{ij}}=\frac{\sum_{i,j}\beta_{ij}^2L_{ij}^{-1}}{\sum_{i,j}\beta_{ij}}.
\end{align}

		Using (\ref{PEP_exp}), (\ref{sing_values}), and (\ref{theta}), the PEP is upper bounded by
		\begin{align}\label{PEP_exp2_single_user}
		P(\underline{\text{c}}\rightarrow\underline{\hat{\text{c}}})\leq\frac{1}{2} \mathit{E}\left[\text{exp}\left(\frac{-d_{\text{min}}^2\alpha_{\text{min}}N_s}{4N_0L_t}\Theta\right)\right],
		\end{align}
		which is the definition of the moment generating function (MGF)\cite{Bulmer1965} for the random variable $\Theta$. By using the definition, (\ref{PEP_exp2_single_user}) can be written as
		\begin{align}\nonumber
		P(\underline{\text{c}}\rightarrow\underline{\hat{\text{c}}})=&g(d, \alpha_{\text{min}}, \chi)\\
		\leq&\frac{1}{2} \left(1+ \theta \frac{d_{\text{min}}^2\alpha_{\text{min}}N_sN_t}{4L_t}SNR\right)^{-k}\label{PEP_MGF1}\\
		\approx&\frac{1}{2}\left(\theta\frac{d_{\text{min}}^2\alpha_{\text{min}}N_sN_t}{4L_t}SNR\right)^{-k} \label{PEP_MGF2}
		\end{align}
		for high SNR.  The function $g(d, \alpha_{\text{min}}, \chi)$ denotes the PEP of two codewords with $d(\underline{\text{c}}-\underline{\hat{\text{c}}}) = d$, with $\alpha_{\text{min}}$ corresponding to $\underline{\text{c}}$ and $\underline{\hat{\text{c}}}$, and with constellation $\chi$.
		In (\ref{PEP_MGF1})  $\theta$ and $k$ are defined as (\ref{shape}) and (\ref{scale}).
		
		In BICMB, $P_b$ can be calculated as \cite{Akay2007}
		\begin{align}\label{BICMB_Pb_single_user}
		P_b \leq \frac{1}{k_c}\sum_{d=d_{\text{free}}}^{\infty}\sum_{i=1}^{W_I(d)}g(d,\alpha_{min}(d,i),\chi),
		\end{align}
		where $W_I(d)$ denotes the total input weight of error events at Hamming distance $d$.	Following (\ref{PEP_MGF2}) and (\ref{BICMB_Pb_single_user})
		\begin{align} \label{SU_bound}
		P_b \leq\frac{1}{k_c} \sum_{d=d_{\text{min}}}^{\infty}\sum_{i=1}^{W_I(d)} \frac{1}{2}\left(\theta\frac{d_{\text{free}}^2\alpha_{\text{min}}N_sN_t}{4L_t}SNR\right)^{-k}.
		\end{align}
		
		The SNR component has a power of $-k$ for all summations. Hence, BICMB achieves full diversity order of
		\begin{align}\label{su_dg2}
		D_G=k=\frac{\left(\sum_{i,j}\beta_{ij}\right)^2}{\sum_{i,j}\beta_{ij}^2L_{ij}^{-1}}
		\end{align}
	which is independent of the number of spatial streams transmitted.
	\end{proof}

	\newtheorem{remarks}{\textbf{Remark}}
	\begin{remarks}
		Under the case where $N_t$ and $N_r$ are large enough and assuming that $L_{ij}=L$ and $\beta_{ij}=\beta$ for any $i$ and $j$, it can be seen easily that the distributed massive MIMO system can achieve a diversity gain
		\begin{align}
	    D_G=L_t=M_rM_tL.
		\end{align}
		
	\end{remarks}

	\begin{remarks}
		Theorem 1 implies that the diversity gain is independent of the number of data streams, i.e., the transmitter can send the maximum number of data streams $N_s\leq L_t$, and still get the same diversity gain. This will be illustrated in Section IV.
	\end{remarks}

	\section{System Model for Multi-user Scenario}
		Consider a downlink multi-user massive MIMO system as shown in Fig. \ref{mu_fc}. The antenna array at the base station (BS) consists of $N_{t}^{\text{RF}}$ RF chains and $M$ RAUs, each of which has $N_t$ antennas. There are $K$ different mobile stations (MS) and each one of them is equipped with $N_r$ antennas and $N_r^{\text{RF}}$ RF chains. The BS transmits $KN_s$ data streams and each MS receives its $N_s$ data streams. We constrain the number of RF chains in order to reduce the hardware complexity of the massive MIMO system. This constraint for the BS is $KN_s\leq N_{t}^{RF}\ll N_t$ and $N_s\leq N_{r}^{RF}\ll N_r$ for each MS.
	
		\begin{figure}[!t]
		\centering
        \includegraphics[width=0.8\textwidth]{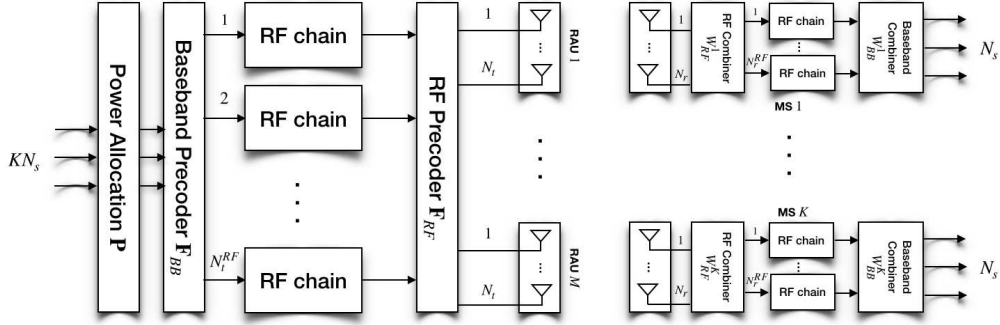}
		\caption{Block diagram of a multi-user mm-Wave massive MIMO system with distributed antenna arrays.}
		\label{mu_fc}
	\end{figure}

	We denote the RF precoder $\mathbf{F}_{RF}$ by an $MN_t\times N_t^{\text{RF}}$ matrix and the baseband precoder  $\mathbf{F}_{BB}$ by an $N_t^{\text{RF}}\times KN_s$ matrix. At the BS, the transmitted symbols of $K$ users first go through a power allocation matrix $\mathbf{P}$ which is  $KN_s\times KN_s$  and $||\mathbf{P}||_F^2=P_t$. Since the RF precoder matrix only changes the phase of the input signal, its magnitude is constant, i.e., $\left|\mathbf{F}_{RF}(i,j)\right|=\frac{1}{\sqrt{N_t}}$. Also, because of the power constraint at the BS, we need to satisfy $|| \mathbf{F}_{RF} \mathbf{F}_{BB}||_F^2=KN_s$. We assume that the CSI is known at both transmitter and receiver. We employ a narrowband flat fading channel model for CSI. The received signal at the $k$-th MS after combining is given by
	\begin{align}\label{rec_sig0}
	    \mathbf{y}_k={\mathbf{W}^k}^H_{BB}{\mathbf{W}^k}^H_{RF}{\mathbf{H}}_{k}{\mathbf{F}}_{RF}{\mathbf{F}}_{BB}\mathbf{P}\mathbf{x}+{\mathbf{W}^k}^H_{BB}{\mathbf{W}^k}^H_{RF}\mathbf{n}_k
	\end{align}
	where $k \in \{1,2,\dots,K\}$. The channel matrix corresponding to the $k$-th MS $\mathbf{H}_k$ is $N_r\times MN_t$  and $\mathbf{n}_k$ is a $N_r\times1$ vector consisting of i.i.d. $\mathcal{CN}(0,N_0)$ noise samples, where $N_0=\frac{1}{SNR}$. Also, $\mathbf{x}=\left[\mathbf{x}_1^T,\mathbf{x}_2^T,\dots,\mathbf{x}_K^T\right]^T$ is a $KN_s\times 1$ vector representing total transmitted symbols of $K$ users, satisfying $\mathbb{E}\{\mathbf{x}\mathbf{x}^H\}=\frac{1}{KN_s}\mathbf{I}_{KN_s}$. Note that $\mathbf{x}_k$ consists of the $N_s$ symbols transmitted to the $k$-th user. $\mathbf{W}_{RF}$ is the $N_r\times N_r^{\text{RF}}$ RF combining matrix and $\mathbf{W}_{BB}$ is the $N_r^{\text{RF}}\times N_s$ baseband combining matrix for $k$-th MS.
	
	We define the baseband channel as
	\begin{equation}\label{bb_channel}
	\bar{\mathbf{H}}_k={\mathbf{W}^k}^H_{RF}\mathbf{H}_k\mathbf{F}_{RF}.
	\end{equation}
	The estimated signal can be expressed as
	\begin{align}\label{rec_sig_mu}
	y_{k,s}=&{{\mathbf{W}^k}^H_{BB}\left(s,:\right)}\bar{\mathbf{H}}_k\mathbf{F}_{BB}\left(:,k_s\right)\sqrt{{P}_{k,s}}x_{k,s}\nonumber \\ \nonumber
	&+\sum_{s'=1,s'\neq i}^{N_s}{\mathbf{W}^k}^H_{BB}(s,:)\bar{\mathbf{H}}_k\mathbf{F}_{BB}(:,k_s')\sqrt{P_{k,s'}}x_{k,s'}\\
	&+\sum_{l=1,l\neq k}^{K}\sum_{s''=1}^{N_s}{\mathbf{W}^k}^H_{BB}(s,:)\bar{\mathbf{H}}_k\mathbf{F}_{BB}(:,l_{s''})\sqrt{P_{l,s''}}x_{l,s''}\nonumber\\
	&+{\mathbf{W}^k}^H_{BB}(i,:){\mathbf{W}^k}^H_{RF}\mathbf{n}_k
	\end{align}
	where $k_s=(k-1)N_s+s$, $y_{k,s}$ is the $k_s$-th element of $\mathbf{y}$ in (\ref{rec_sig0}). The first term in (\ref{rec_sig_mu}) is our desired signal,  the second term is intersymbol interference and the third term is inter-user interference. The last term is the noise.
	
	 We define $\mathbf{H}_k$  as
	\begin{align}\label{H_k}
	    \mathbf{H}_{k}=\left[\sqrt{\beta_{k1}}\mathbf{H}_{k1} \dots \sqrt{\beta_{kM}}\mathbf{H}_{kM}\right].
	\end{align}
	where $\beta_{kj}$ is a real-valued  nonnegative number which represents the large-scale fading effect between the $i$-th RAU at the receiver and $j$-th  RAU  at  the  transmitter. Note that subchannel matrix $\mathbf{H}_{kj}$ is defined as (\ref{Hij}).
	
	By modifying Algorithm 1 in \cite{XWu2018}, a two-stage hybrid beamforming is being used here to eliminate the inter-user interference. This approach maximizes the
sum-rate of the communication system  based on the two-stage approach in
massive MIMO
with double the least number of RF chains
(the least number of
RF chains is equal to the number of streams to be transmitted), i.e., $N_t^{RF}=2KN_s$ and $N_r^{RF}=2N_s$. The details of the beamforming can be found in Appendix A.

Based on (\ref{rec_sig_mu}) and by using the optimum precoders and combiners, the system input-output relation at the $m$-th time instant for the $k$-th user can be written as
\begin{align}\label{rec_sig_mu2}
    y_{k,s}^m= \sqrt{P_{k,s}}\sigma_{k,s}x_{k,s}^m + \tilde{n}_{k,s}^m
\end{align}
where $\tilde{n}_{k,s}^m={\mathbf{W}^k}^H_{BB}(s,:){\mathbf{W}^k}^H_{RF}\mathbf{n}_k^m$ and $\sigma_{k,s}$ is the $s$-th diagonal element of $\bar{\mathbf{\Sigma}}_k$, where $\bar{\mathbf{\Sigma}}_k$ is calculated by using the SVD of the matrix $\mathbf{H}_{k,\text{total}}$:
\begin{align}\label{svd_H_t}
    \mathbf{H}_{k,\text{total}}={\mathbf{W}^k}^H_{BB}{\mathbf{W}^k}^H_{RF}{\mathbf{H}}_{k}{\mathbf{F}}_{RF}{\mathbf{F}}_{BB}=\bar{\mathbf{U}}_k\bar{\mathbf{\Sigma}}_k\bar{\mathbf{V}}_k^H
\end{align}
\section{Diversity Gain and PEP Analysis for Multi-User Scenario}
In this section, we investigate using  BICMB for a multi-user scenario to increase the diversity gain while transmitting more than one data stream per user through the channel.

The inter-user interference was eliminated in Section II by using a hybrid beamforming method for multiple users. After beamforming, the pairwise error probability (PEP) can be used in a similar way to \cite{Sedighi2019} to find the upper bound for the error probability. Since this work is only concerned with high SNR regimes, we assume uniform power allocation for matrix $\mathbf{P}$. The change in achievable information rate in this case is negligible. By this assumption, without loss of generality, we assume $\mathbf{P}=\mathbf{I}_{KN_s}$.

 	\begin{theorem}\label{th2}
		When $N_t$ and $N_r$ are sufficiently large, the downlink transmission in a massive MIMO multiuser system can achieve a diversity gain for each user equal to
		\begin{align}
	        {D_{G,i}}=\frac{M^2}{\sum_{j=1}^{M}L_{ij}^{-1}}
		\end{align}
		$\text{ for } i=1,\dots,K.$
	\end{theorem}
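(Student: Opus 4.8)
The plan is to reduce the multi-user analysis to the single-user argument of Theorem 1, exploiting the fact that the two-stage hybrid beamforming of Section IV (detailed in Appendix A) removes the inter-user and inter-symbol interference, leaving each user with a bank of parallel subchannels. Starting from the post-beamforming input-output relation (\ref{rec_sig_mu2}) with uniform power allocation $\mathbf{P}=\mathbf{I}_{KN_s}$, the $k$-th user sees $y_{k,s}^m=\sigma_{k,s}x_{k,s}^m+\tilde{n}_{k,s}^m$, which is structurally identical to the single-user relation (\ref{def_rec}) with $\sigma_{k,s}$ playing the role of $\lambda_s$. First I would replay the BICMB bit-metric and ML-decoder construction verbatim, obtaining the conditional PEP bound $P(\underline{\text{c}}\rightarrow\underline{\hat{\text{c}}}\mid\mathbf{H}_k)\leq Q\left(\sqrt{d_{\text{min}}^2\sum_{s}\alpha_s\sigma_{k,s}^2/(2N_0)}\right)$, then apply $Q(x)\leq\frac{1}{2}e^{-x/2}$ together with the $\alpha_{\text{min}}$ lower bound exactly as in (\ref{sing_values})--(\ref{PEP_exp}); the pre-factors (including the analog of $L_t$, namely the number of nonzero singular values of $\mathbf{H}_k$) only scale the bound and do not influence the SNR exponent.

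The crux is then to identify the distribution of $\Theta_k\triangleq\sum_{s}\sigma_{k,s}^2=\|\mathbf{H}_{k,\text{total}}\|_F^2$. I would argue that, asymptotically in $N_t$ and $N_r$, the interference-cancelling beamformer of Appendix A acts on the user's useful subspace like the SVD of $\mathbf{H}_k$, so that it preserves the channel energy in the dominant subchannels and $\Theta_k=\sum_{j=1}^M\beta_{kj}\|\mathbf{H}_{kj}\|_F^2$. Invoking Theorem 3 of \cite{Ayach2012} as in (\ref{Psi_def}) then converts this into $\Theta_k=N_rN_t\sum_{j=1}^M\Psi_{kj}$ with $\Psi_{kj}\sim\mathcal{G}(L_{kj},2\beta_{kj}L_{kj}^{-1})$ independent across $j$. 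Applying the Welch--Satterthwaite approximation (\ref{shape}) with the receive index fixed at $i=k$ gives an aggregate Gamma shape $\frac{(\sum_j\beta_{kj})^2}{\sum_j\beta_{kj}^2L_{kj}^{-1}}$, which collapses to $\frac{M^2}{\sum_j L_{kj}^{-1}}$ under the uniform large-scale fading assumption $\beta_{kj}=\beta$ across transmit RAUs (the multi-user counterpart of the single-user Remark). Since the PEP in (\ref{PEP_exp2_single_user}) is precisely the MGF of this Gamma variable, the high-SNR form $\propto SNR^{-k}$ of (\ref{PEP_MGF2}) and the union bound (\ref{BICMB_Pb_single_user}) then yield $D_{G,k}=\frac{M^2}{\sum_j L_{kj}^{-1}}$, completing the proof.

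I expect the main obstacle to be justifying the energy-preservation claim for $\Theta_k$. In the single-user case the beamformer is a clean, unconstrained SVD, so $\sum_s\lambda_s^2=\|\mathbf{H}\|_F^2$ is immediate; here the beamformer is additionally constrained to null the other $K-1$ users' signals and is realized in hybrid analog-digital form with only $N_t^{RF}=2KN_s$ RF chains. I would therefore appeal to the asymptotic near-orthogonality of distinct users' array-response subspaces as $N_t\rightarrow\infty$ (so that interference nulling costs a vanishing fraction of the useful energy) and to the near-optimality of the closed-form two-stage beamformer of \cite{XWu2018} established in Appendix A, in order to conclude that the effective per-user channel retains the same $\sum_j\beta_{kj}\|\mathbf{H}_{kj}\|_F^2$ Frobenius energy, and hence the same Welch--Satterthwaite shape, as an ideal SVD front end.
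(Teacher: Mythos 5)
Your proposal is correct and follows essentially the same route as the paper's own proof: reduce to the single-user PEP machinery through (\ref{rec_sig_mu2}), express $\Theta_k=\|\mathbf{H}_{k,\text{total}}\|_F^2$ as a sum of independent Gamma variables $\Psi_{kj}$ via Theorem 3 of \cite{Ayach2012}, and identify the diversity order with the Welch--Satterthwaite shape parameter (\ref{shape_mu}). The only notable differences lie in one sub-step: the paper justifies the energy-preservation claim by the optimum-precoder property ${\mathbf{V}_k^{1:N_s}}^H\mathbf{F}_{\text{RF}}^{\text{opt}}\mathbf{F}_{\text{BB}}^{\text{opt}}{\mathbf{F}_{\text{BB}}^{\text{opt}}}^H{\mathbf{F}_{\text{RF}}^{\text{opt}}}^H\mathbf{V}_k^{1:N_s}=\mathbf{I}_{N_s}$ cited from \cite{XWu2018,Payami2016} in (\ref{theta_MU}) rather than your asymptotic subspace-orthogonality argument, and it pulls a uniform $\beta$ out of the trace from the outset, whereas you carry general $\beta_{kj}$ through Welch--Satterthwaite and only then impose $\beta_{kj}=\beta$ --- which makes explicit, as the paper does not, why the stated formula carries no $\beta$ dependence.
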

	\begin{proof}
	The proof is similar to Theorem 1 in Section III. The argument in (\ref{ML_bit})-(\ref{sing_values}) remains the same with $y_{k,s}^m$ replacing $y_{k,s}$, $\sigma_{k,s}$ replacing $\lambda_s$, $\tilde{x}_{k,s}^m$ replacing $\tilde{x}_{k,s}$, and $\hat{x}_{k,s}^m$ replacing $\hat{x}_{k,s}^m$.
		Let us define
		\begin{align} \label{theta_MU}
		\Theta_k \triangleq& \sum_{s=1}^{L_{t}}\sigma_{k,s}^2=||\mathbf{H}_{k,\text{total}}||_F^2\nonumber \\
		=&\text{tr}\left({\mathbf{W}^k}^H_{BB}{\mathbf{W}^k}^H_{RF}{\mathbf{H}}_{k}{\mathbf{F}}_{RF}{\mathbf{F}}_{BB}{\mathbf{F}}_{BB}^H{\mathbf{F}}_{RF}^H{\mathbf{H}}_{k}^H{\mathbf{W}^k}_{RF}{\mathbf{W}^k}_{BB}\right)\nonumber\\
		 =&\text{tr}\left(\bar{\mathbf{\Sigma}}_k\bar{\mathbf{\Sigma}}_k^H\right)=\beta\frac{N_t}{N_r}\sum_{j=1}^{M}\text{tr}\left(\mathbf{\Lambda}_{kj}\mathbf{\Lambda}_{kj}^H\right)=\beta\frac{N_t}{N_r}\sum_{j=1}^{M}\sum_{s=1}^{L_t}\lambda_{kjs}^2
		\end{align}
where $\bar{\mathbf{\Sigma}}_k$ is defined in (\ref{svd_H_t}) and $\mathbf{H}_{kj}=\mathbf{A}_{kj}\mathbf{\Lambda}_{kj}\mathbf{B}_{kj}^H$ is the SVD of the matrix $\mathbf{H}_{kj}$. Note that due to the similarity between
the hybrid beamformer matrices at the transmitter and the
receiver, same design algorithms are applicable to both sides. Therefore, as mentioned in  \cite{XWu2018, Payami2016}, by choosing the optimum precoders, ${\mathbf{V}_k^{1:N_s}}^H\mathbf{F}_{\text{RF}}^{\text{opt}}\mathbf{F}_{\text{BB}}^{\text{opt}}{\mathbf{F}_{\text{BB}}^{\text{opt}}}^H{\mathbf{F}_{\text{RF}}^{\text{opt}}}^H\mathbf{V}_k^{1:N_s}=\mathbf{I}_{N_s}$ for $k=1,\dots,K$, where $\mathbf{H}_k=\mathbf{U}_k\mathbf{\Sigma}_k\mathbf{V}^H_k$ is the SVD of the channel matrix $\mathbf{H}_k$. Same procedure can be applied to the combiner part.

		Theorem 3 in \cite{Ayach2012} implies that the singular values of $\mathbf{H}_{kj}$ converge to $\sqrt{\frac{N_rN_t} {L_{kj}}}\left|\alpha^{l}_{kj}\right|$ in descending order when the number of antennas at the transmitter and the receiver goes to infinity.
		
		By using the singular values of $\mathbf{H}_{kj}$, (\ref{theta_MU}) can be rewritten as
		\begin{align}\label{Psi_def_mu}
		\Theta_k=\beta N_t^2\sum_{j=1}^{M}\underbrace{\frac{1}{L_{kj}}\sum_{l=1}^{L_{kj}}\left|\alpha_{kj}^l\right|^2}_{\Psi_{kj}}.
		\end{align}
	
		It can be seen easily that $\Psi_{kj}$ in (\ref{Psi_def_mu}) has Gamma distribution with shape $\kappa_{kj}=L_{kj}$ and scale $\theta_{kj}=2L_{kj}^{-1}$, i.e., $\Psi_{ij} \sim \mathcal{G}(L_{kj},2L_{kj}^{-1})$\cite{Hogg1978}.
		One can use the Welch-Satterthwaite equation to calculate an approximation to the degrees of freedom of $\Theta_k$ (i.e., shape and scale of the Gamma distribution) which is a linear combination of the independent random variables $\Psi_{kj}$ \cite[p.4.1-1]{Satterth1946}, \cite{Massey}
		\begin{align}\label{shape_mu}
		\kappa_k&=\frac{\left(\sum_{j}\theta_{kj}\kappa_{kj}\right)^2}{\sum_{i,j}\theta_{ij}^2\kappa_{kj}}=\frac{M^2}{\sum_{j}L_{kj}^{-1}},\\\label{scale_mu}
		\theta_k&=\frac{\sum_{j}\theta_{kj}^2\kappa_{kj}}{\sum_{j}\theta_{kj}\kappa_{kj}}=\frac{\sum_{j}L_{kj}^{-1}}{M}.
		\end{align}
		
		%%%%%%%%%%
		%%%%%%%%%%
		By following (\ref{PEP_exp2_single_user})-(\ref{BICMB_Pb_single_user}) with $\Theta_k$ replacing $\Theta$, $\theta_k$ replacing $\theta$, and $\kappa_k$ replacing $k$, we have
		\begin{align} \label{MU_bound}
		P_b \leq\frac{1}{k_c} \sum_{d=d_{\text{min}}}^{\infty}\sum_{i=1}^{W_I(d)} \frac{1}{2}\left(\theta\frac{d_{\text{free}}^2\alpha_{\text{min}}N_sN_t}{4L_t}SNR\right)^{-\kappa_k}.
		\end{align}
		
		The SNR component has a power of $-\kappa_k$ for all summations. Hence, BICMB achieves full diversity order of
		\begin{align}\label{mu_dg2}
		D_{G,k}=\kappa_k=\frac{M^2}{\sum_{j}L_{kj}^{-1}}
		\end{align}
	which is independent of the number of spatial streams transmitted.
	\end{proof}

	\begin{remarks}
		Theorem 1 implies that each MS's diversity gain is different than that of the other user and depends on the large scale fading coefficients and number of propagation paths for each user. It can be seen easily that the diversity gain is independent of the number of users.
	\end{remarks}
	\begin{remarks}
		Under the case where $N_t$ and $N_r$ are sufficiently large  and assuming that $L_{kj}=L$ and $\beta_{kj}=\beta$ for any $k$ and $j$, it can be seen easily that the distributed massive MIMO system can achieve a diversity gain
		\begin{align}
		D_{G,k}=ML
		\end{align}
		which is independent of the number of users.
	\end{remarks}

	\section{Simulation Results}
	\subsection{Single-User Scenario}
	In the simulations, the industry standard 64-state 1/2-rate (133,171) $d_{\text{free}}=10$ convolutional code is used. For BICMB, we separate the coded bits into different substreams of data and a random interleaver is used to interleave the bits in each substream. We assume that the number of RF chains in the receiver and transmitter are twice the number of data streams \cite{Sohrabi2016} (i.e., $N_t^{\text{RF}}=N_r^{\text{RF}}=2N_s$) and each scale fading coefficient $\beta_{ij}$ equals $\beta = -20$ dB (except for Fig. \ref{diff_g}). For the sake of simplicity, only ULA array configuration with $d=0.5$ is considered at RAUs. For Fig. \ref{no_intlv}--\ref{co}, Binary Phase Shift Keying (BPSK) modulation is employed for each data stream. For Fig. \ref{diff_g} information bits
are mapped onto 16 Quadrature Amplitude Modulation (QAM)
symbols in each subchannel.

\begin{figure}[!t]
	\centering
	\includegraphics[width=.70\textwidth]{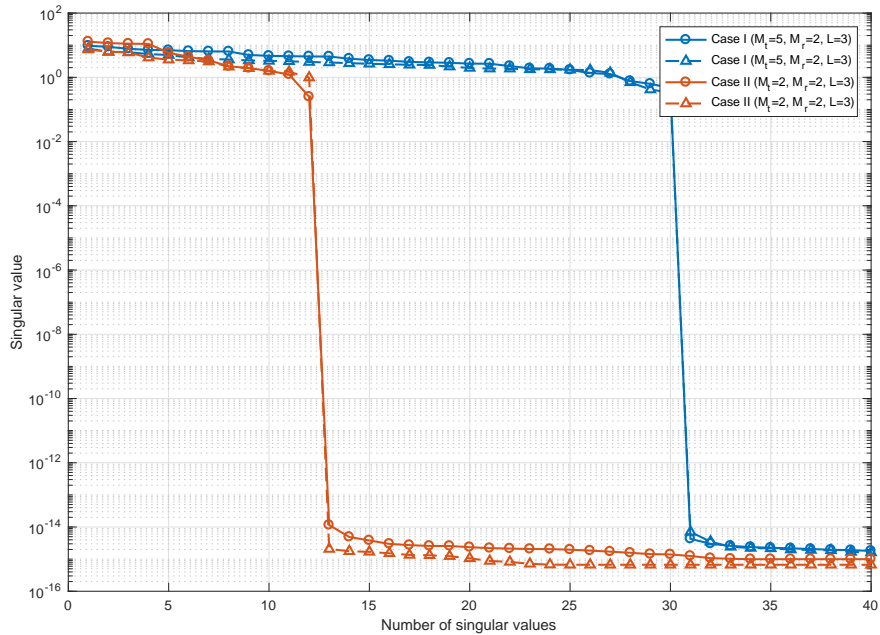}
	\caption{Singular values of the sparse mm-Wave channel with $N_t=100$ and $N_r=50$.}
	\label{singular}
\end{figure}	
	Two different cases are simulated in Fig. \ref{singular}. In Case I, the rank of the channel is $\text{rank}(\mathbf{H})=M_tM_rL=30$. For the first scenario, which is shown with circle markers, $N_t=2N_r=100$, while in the second scenario shown with triangle markers, $N_t=2N_r=400$. It can be seen from Fig. \ref{singular} that that the number of singular values of the mm-Wave channel is independent from the number of antennas at both transmitter and receiver side. Same result can be seen with Case II. Hence, there are only limited subchannels which can be used to transmit the data. The number of available subchannels $L_t=\sum_{i=1}^{M_r}\sum_{j=1}^{M_t}L_{ij}$ which is the rank of the channel $\mathbf{H}$ and is independent of the number of antennas in RAUs in both transmitter and receiver side.
	
\begin{figure}[!t]
	\centering
	\includegraphics[width=.70\textwidth]{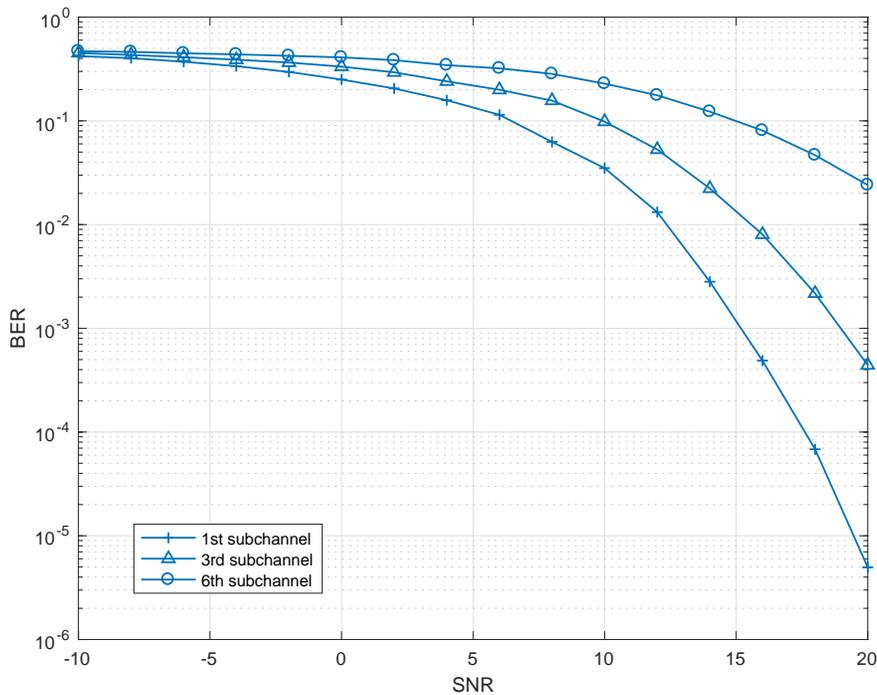}
	\caption{BER with respect to $\text{SNR}$ with $N_t=100$, $N_r=50$, $M_t=2$, $M_r=2$  and $L=2$ for $N_s=6$.}
	\label{no_intlv}
\end{figure}

Fig. \ref{no_intlv} illustrates the importance of the interleaver design. A random interleaver is used such that consecutive coded bits are transmitted over the same subchannel. Consequently, an error on the trellis occurs over paths that are spanned by the worst channel and the diversity order of coded multiple beamforming approaches to that of uncoded multiple beamforming with uniform power allocation. In other words, the BER performance decreases when the interleaving design criteria are not met.

\begin{figure}[!t]
		\centering
		\includegraphics[width=.70\textwidth]{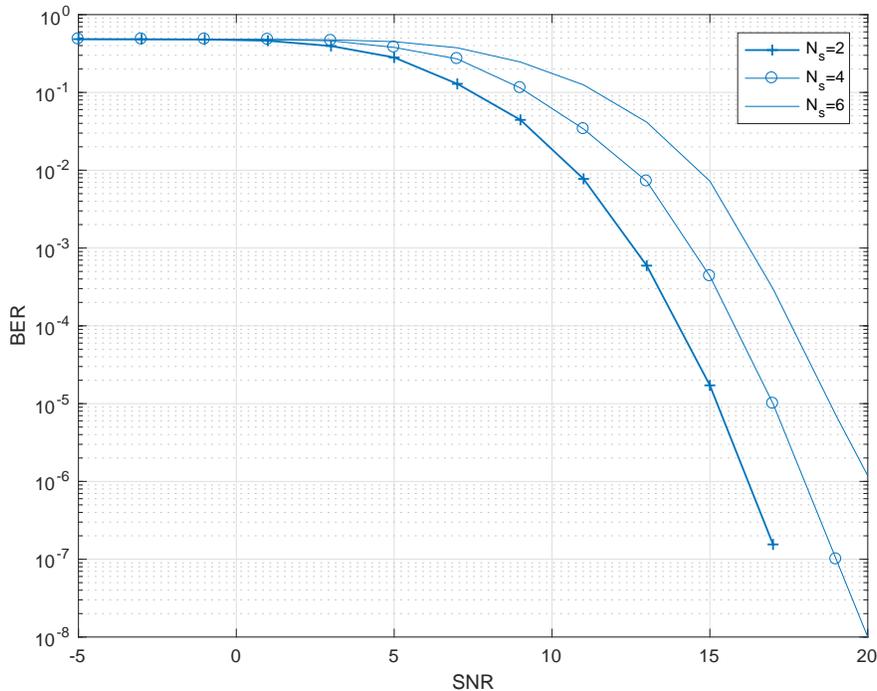}
		\caption{BER with respect to $\text{SNR}$ with $N_t=100$, $N_r=50$, $M_t=3$, $M_r=1$  and $L=2$ for different values of $N_s$.}
		\label{su_dg8}
	\end{figure}
On the other hand, as we expect from (\ref{su_dg2}), changing the number of streams $N_s$ should not change the diversity gain, i.e., the slope of the BER curve in high SNR. As it can be seen from Fig. \ref{su_dg8}, the slope does not change by changing the number of data streams. Hence, one can get the same diversity gain by using the maximum number of data streams available ($L_t$).
	
\begin{figure}[!t]
	\centering
	\includegraphics[width=.70\textwidth]{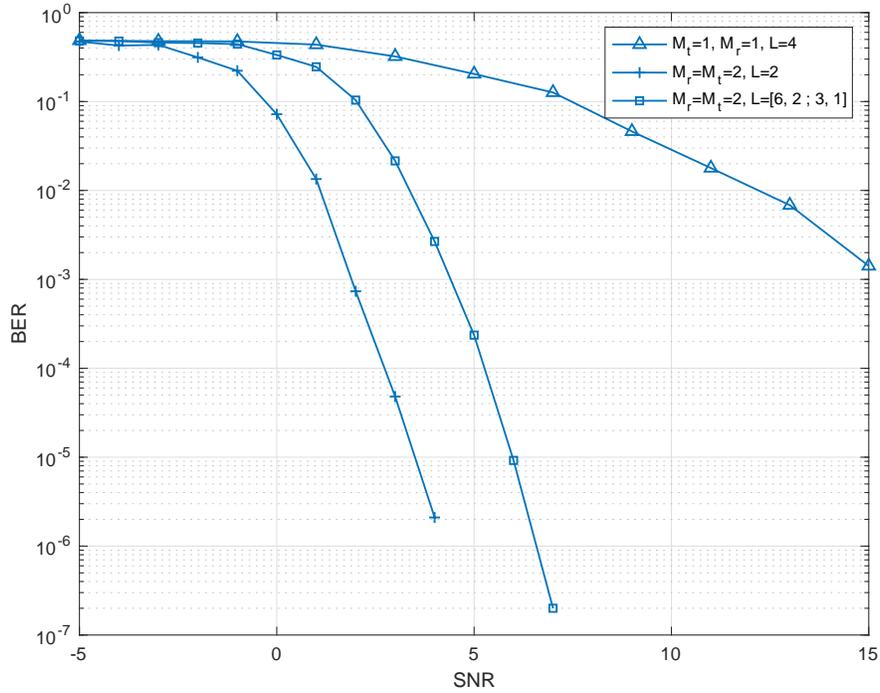}
	\caption{BER with respect to $\text{SNR}$ with $N_t=100$, $N_r=50$ and $N_s=3$.}
	\label{co}
\end{figure}
Fig. \ref{co} illustrates the results for BICMB for both co-located and distributed mm-Wave massive MIMO systems. The diversity gain for the distributed system outperforms the co-located system, even though the channel in the co-located system has richer scattering (the number of propagation paths in the co-located system is twice as the distributed system). Also, as it can be seen from the figure, the curves for the distributed systems are parallel to each other, especially for the high-SNR region, which can be confirmed by (\ref{su_dg2}). Note, for distributed systems, when $\beta_{ij}=\beta$, $2 \times 2 \times 2 = (2\times 2)^2 / (6^{-1}+2^{-1}+3^{-1}+1^{-1})$ as in (\ref{su_dg2}).

\begin{figure}[!t]
	\centering
	\includegraphics[width=.70\textwidth]{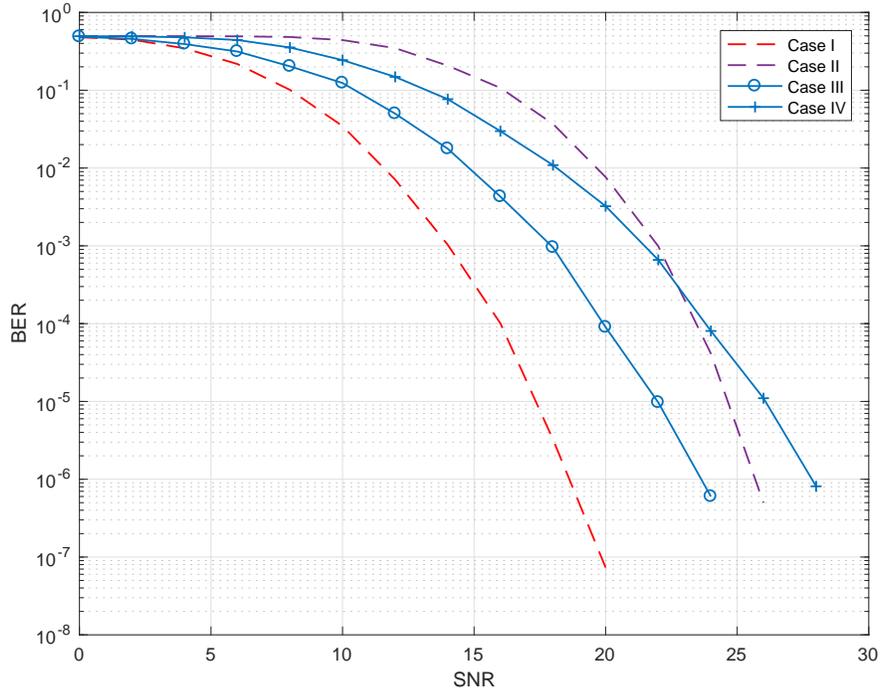}
	\caption{BER with respect to $\text{SNR}$ with $N_t=100$, $N_r=50$ and $N_s=1$.}
	\label{diff_g}
\end{figure}
Fig. \ref{diff_g} illustrates the effect of the large scale fading coefficient on the diversity gain. Despite other simulations, we consider inhomogeneous large scale fading coefficients. When $M_r=M_t=2$, $L_{ij}=L=2$, $N_t=2N_r=100$ and $N_s=1$ three different cases are simulated. Let $\mathcal{\mathbf{B}}=\left[\beta_{ij}\right]$ where $\beta_{ij}$ expressed in dB, as the large scale fading coefficient matrix. We used the following $\mathbf{B}$ in the simulations:
\begin{align*}
    &\mathbf{B}_1=\begin{bmatrix}
    -20   & -20 \\
    -20  & -20
\end{bmatrix},
    \mathbf{B}_2=\begin{bmatrix}
    -25   & -25 \\
    -25  & -25
\end{bmatrix},\\
    &\mathbf{B}_3=\begin{bmatrix}
    -20   & -35 \\
    -35  & -20
\end{bmatrix},
    \mathbf{B}_4= -20 .
\end{align*}

As it can be seen from Fig. \ref{diff_g}, when the system is homogeneous, the diversity gain remains the same. Case I and Case II, have the same slope in high SNR, which is expected. In Case III, when the system is inhomogeneous, the diversity gain decreases. By using (\ref{su_dg2}), one can easily see that Case III has approximately the same diversity gain as a system with $M_r=M_t=1$ and $L=4$, i.e., $D_G=4$, which is depicted in Case IV.

	\subsection{Multi-User Scenario}
	The assumptions remains the same in these simulations unless otherwise stated. We assume that each scale fading coefficient $\beta_{kj}$ equals to $\beta = - 20$ dB (except for Fig. \ref{attenuation}). In the simulations, information bits
are mapped onto 16 quadrature amplitude modulation (QAM)
symbols in each subchannel.

\iffalse
      Fig. \ref{singulars} depicts a simple case of (\ref{theta}), where there are two different users, i.e., $K=2$ and each user has four different subchannels, which gives us a total number of eight subchannels. On the top left, the singular values of the processed channel and on the top right, the singular values of the actual channel can be seen. On the bottom of the figure, the proportion of these two figures for each subchannel is constant, which means that the effect of using beamforming which results in the processed channel can be recovered at the user side.
		\begin{figure}[h!]
		\centering
		\includegraphics[width=0.70\textwidth]{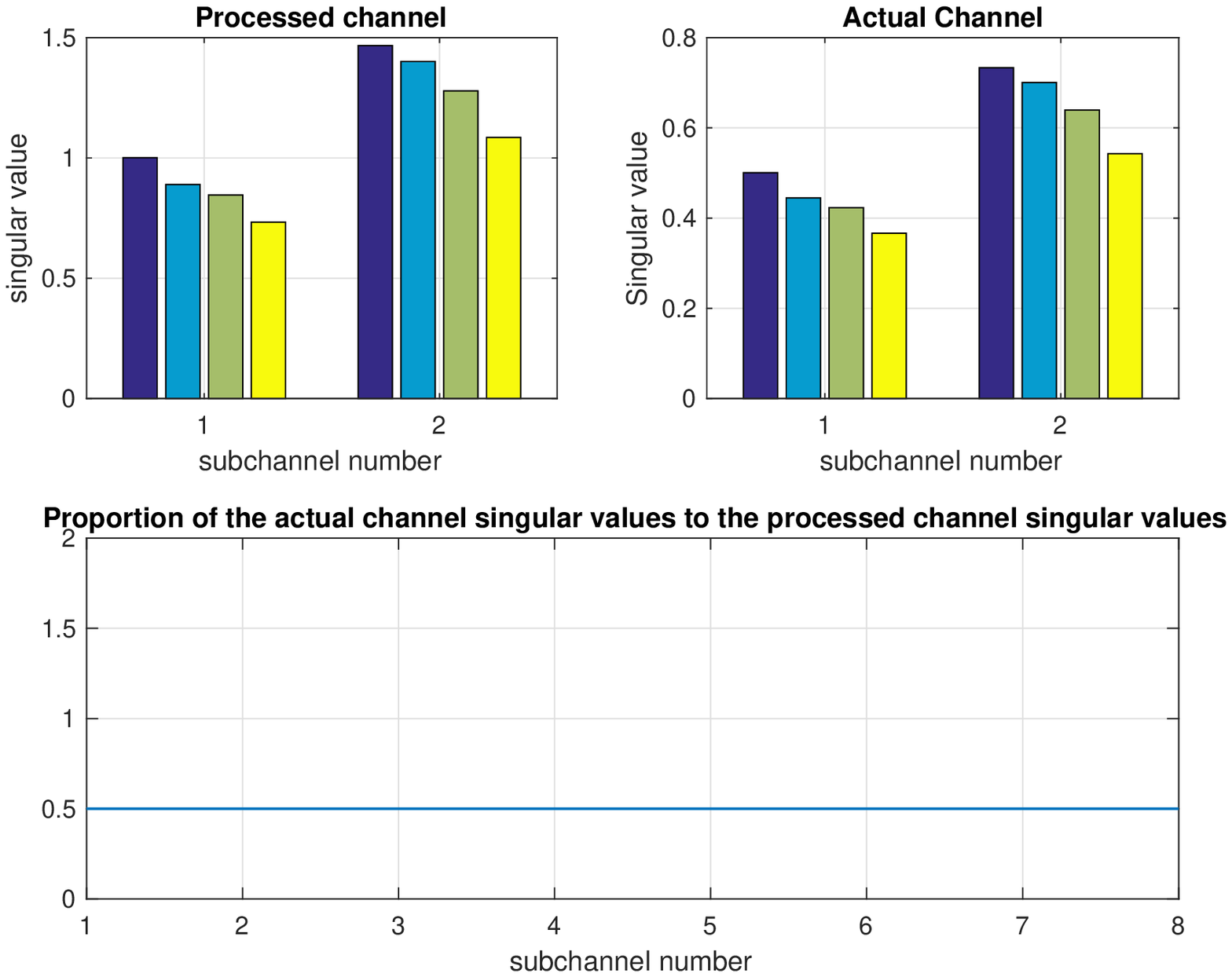}
		\caption{Relation between the singular values of the actual channel $\mathbf{H}_k$ and the processed channel $\mathbf{H_{k,\text{total}}}$ with $N_t=512$, $N_r=64$, $M=3$ and $K=2$}
		\label{singulars}
		
	\end{figure}
	\fi

\begin{figure}[!t]
		\centering
		\includegraphics[width=0.70\textwidth]{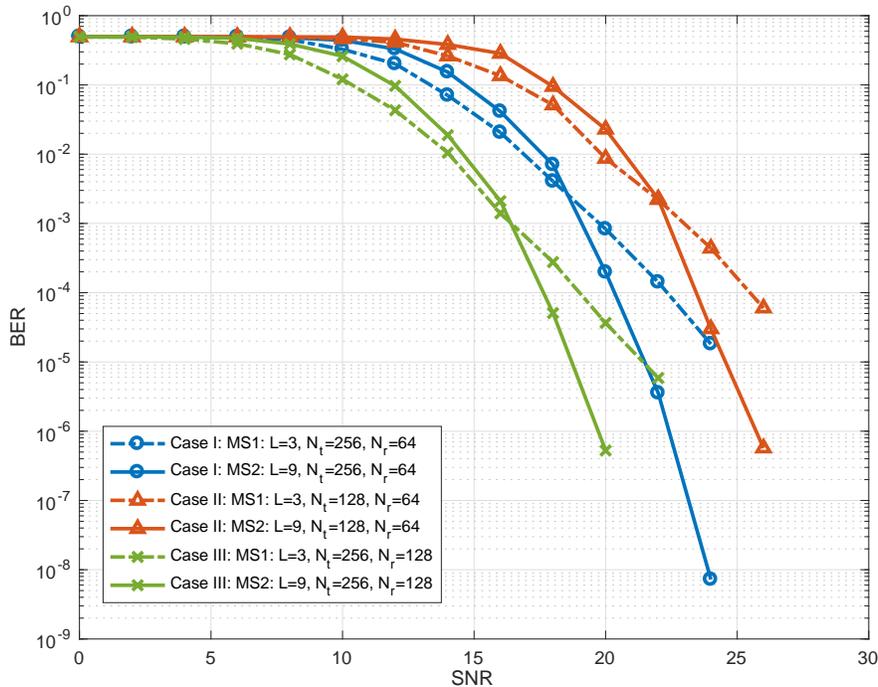}
		\caption{BER with respect to SNR for two cases with different number of antennas in each RAU.   $M=2$, $K=2$, $N_s=3$}
		\label{ber_antenna}
		
\end{figure}
Fig. \ref{ber_antenna} illustrates BER with respect to SNR for three different cases. In each case, two different MSs are being served by the BS. The BS transmits three data streams to each MS. For the first MS, the number of propagation paths $L$ in each case is $L=3$, while for the second MS $L=9$ for all subchannels. By comparing Case I with $N_t=256$ and Case II $N_t=128$ where circle markers represent Case I and triangle markers are for Case II, one can easily see that doubling the number of antennas at the BS has no effect on the slope of the BER, i.e., the diversity gain in high SNR. This confirms (\ref{mu_dg2}) where the diversity gain is independent of the number of antennas at the BS. The independence of the (\ref{mu_dg2}) from the number of antennas at the MS side can be seen by comparing Case III with Case I, where both of them have the same number of antennas at the transmitter, but in Case III, the number of antenna elements at the MS side is twice as Case I. Note that the markers of BER of the users in Case III are a cross sign (x).
	
\begin{figure}[!t]
		\centering
		\includegraphics[width=0.70\textwidth]{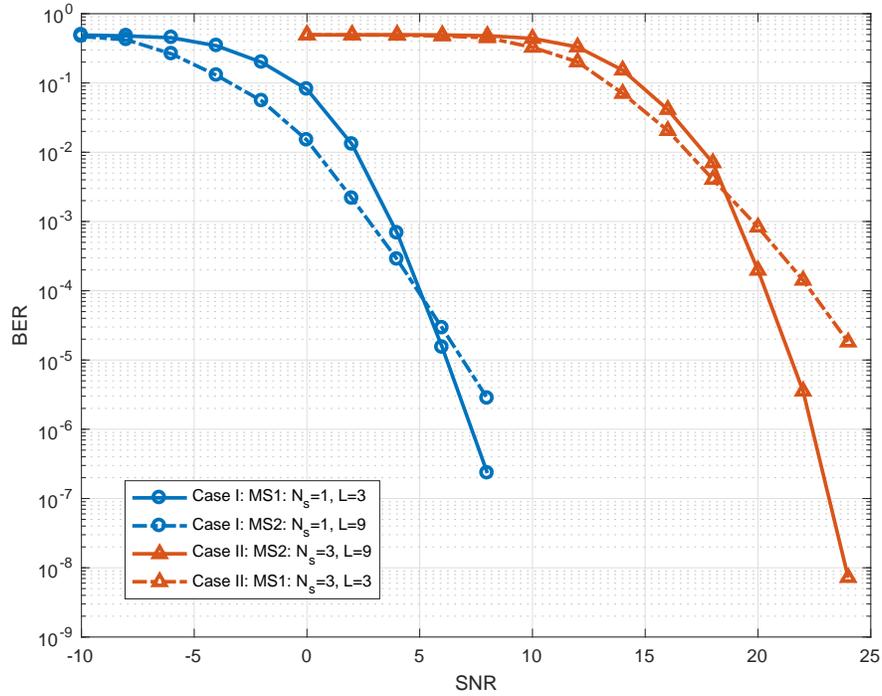}
		\caption{BER with respect to SNR for two cases with different number of data streams sent through the channel.   $M=2$, $K=2$, $N_t=256$ and $N_r=64$}
		\label{ber_stream}
\end{figure}
Similar to Fig. \ref{ber_antenna}, the BS serves two different MSs in two cases in Fig. \ref{ber_stream}. In Case I, each MS only receives one data stream, while in Case II, each MS receives three data streams. When there is no BICMB, one can get the maximum diversity gain by only sending one data stream through the channel. This can be used as a benchmark to compare the diversity gain when the number of data streams increases. It can be seen that with BICMB by sending more data streams through the channel, the slope of the BER curve does not change in high SNR. Hence, one can get the same diversity gain by transmitting maximum number of data streams, i.e., rank of the channel through the channel.
	
\begin{figure}[!t]
		\centering
		\includegraphics[width=0.70\textwidth]{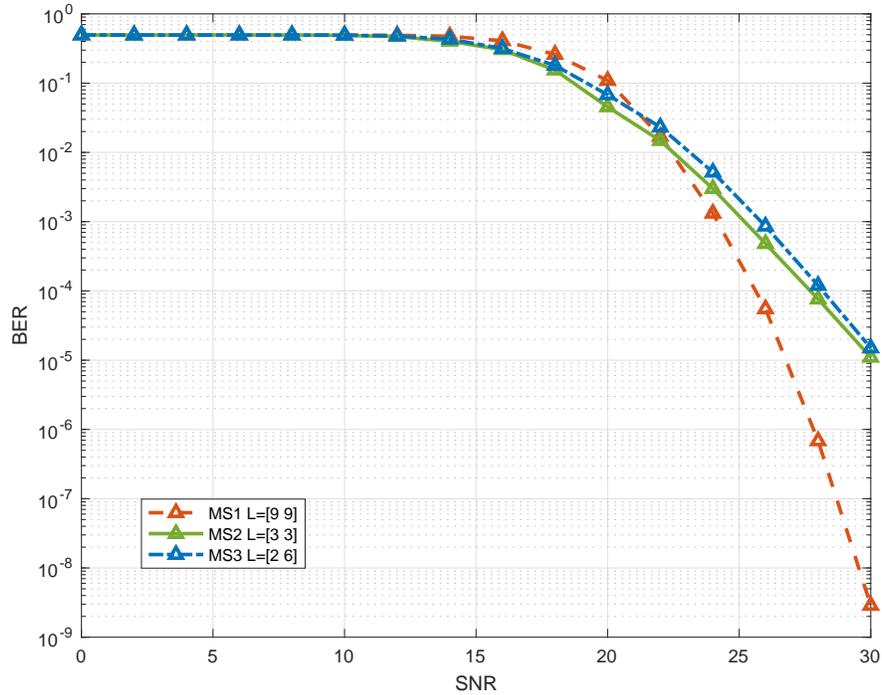}
		\caption{BER with respect to SNR when the number of user increases from two to three when each user receives three data streams.  $L=[l_1 l_2]$ means that the number of propagation paths to the user from the first RAU is $l_1$ and same for the $l_2$ and the second RAU. $M=2$, $K=3$, $N_t=256$ and $N_r=64$}
		\label{new_user}
\end{figure}
Comparing Fig. \ref{ber_antenna} or Fig. \ref{ber_stream} with  Fig. \ref{new_user} shows that by increasing the number of MSs in the system, the diversity gain does not change. Also, one can check (\ref{mu_dg2}) for the second and the third user to see that they have both the same diversity gain.
	
\begin{figure}[!t]
		\centering
		\includegraphics[width=0.70\textwidth]{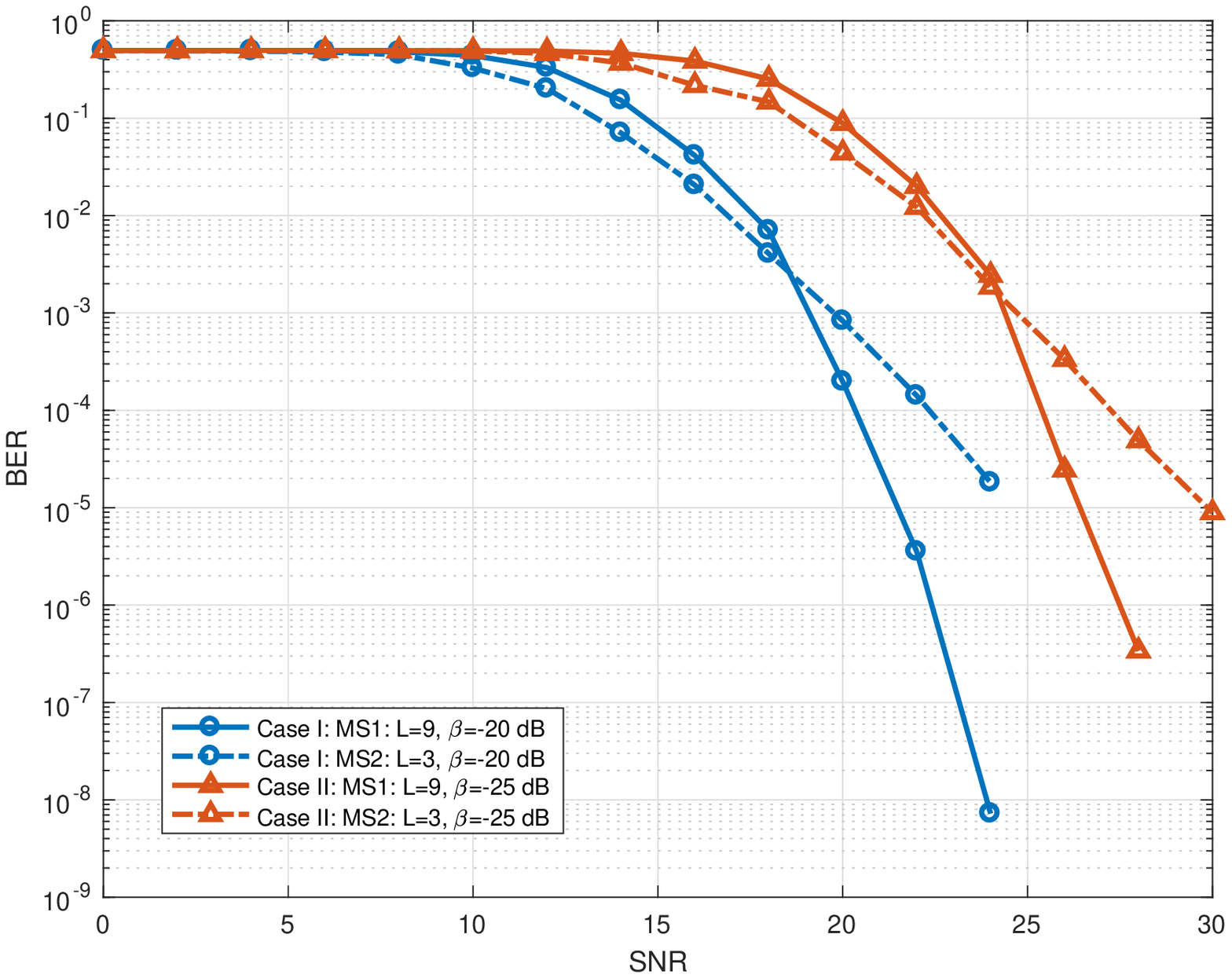}
		\caption{BER with respect to SNR with different large-scale fading coefficient $\beta$. $M=2$, $K=2$, $N_t=256$ and $N_r=64$}
		\label{attenuation}
		
\end{figure}
		It is expected from (\ref{mu_dg2}) that the diversity gain is independent of large-scale fading coefficient $\beta_{kj}$ when the large-scale fading coefficient $\beta_{kj}$ is constant, i.e., $\beta_{kj}=\beta$. Fig. \ref{attenuation} illustrates two different cases. In the first case, $\beta_{kj}=\beta=-25$ dB and in the second case, the value of $\beta$ increases to $\beta=-20$ dB. In both cases, we are transmitting three different data streams for each user. Also, two users are being served in each case. As we expect, the diversity gain remains the same when the large-scale fading coefficient remains constant for all subchannels.
	
\section{Conclusion}
	In this paper we analyzed BICMB in mm-Wave massive MIMO systems for both single-user and multi-user scenarios. BICMB achieves full spatial diversity of $\frac{\left(\sum_{i,j}\beta_{ij}\right)^2}{\sum_{i,j}\beta_{ij}^2L_{ij}^{-1}}$ over $M_t$ RAU transmitters and $M_r$ RAU receivers in the single-user scenario. This means, by increasing the number of RAUs in the distributed system with BICMB, one can increase the diversity gain and multiplexing gain. As it can be seen from the diversity gain formula for the single-user system, the value of diversity gain is independent of the number of antennas in each RAU for both transmitter and receiver. A special case of the diversity gain where $L_{ij}=L$ and $\beta_{ij}=\beta$ would be $M_rM_tL$ which is similar to the diversity gain of a convential MIMO system.  In a multi-user system, BICMB achieves full spatial diversity of $\frac{M^2}{\sum_{j}L_{kj}^{-1}}$ over $M$ RAU transmitter for the $k$-th user. This means one can increase the diversity gain for all users in such a system by increasing the number of RAUs at the transmitter side. Another result is that the diversity gain is independent of the number antennas in both transmitter and receiver side. In a special case when $L_{kj}=L$, the diversity gain is $ML$ which looks like the single-user scenario in \cite{Sedighi2019} when ${M_r=1}$ and $M_t=M$.

	\begin{appendices} \section{Hybrid Beamforming for Multi-User
Massive MIMO Systems}
	In this appendix, the hybrid block diagonalization beamforming for multi-user scenario is summarized based on \cite{XWu2018}. First, by using (\ref{H_k}) and SVD one can define
	\begin{align}\label{app_eq1}
	\frac{1}{\sqrt{N_t}}\mathbf{H}_k=\mathbf{U}_k\mathbf{\Sigma}_k\mathbf{V}_k^H
	\end{align}
	and
		\begin{align}\label{app_eq2}
	\frac{1}{\sqrt{N_t}}\mathbf{H}_{\text{comp}}=\mathbf{U}_{\text{comp}}\mathbf{\Sigma}_{\text{comp}}\mathbf{V}_{\text{comp}}^H
	\end{align}
where
\begin{align}\label{app_eq3}
    \mathbf{H}_{\text{comp}}=\mathbf{W}_{\text{RF}}^H\mathbf{H}=\begin{bmatrix}
    {\mathbf{W}_{RF}^1}^H & 0 & \dots &  0 \\
    0 & {\mathbf{W}_{RF}^2}^H & 0 & \dots &  0 \\
   \vdots & \vdots & \ddots & \vdots \\
   0 & 0 & \dots &   {\mathbf{W}_{RF}^K}^H
\end{bmatrix} =\begin{bmatrix}
    \mathbf{H}_{1}  \\
   \vdots  \\
   \mathbf{H}_{K}
\end{bmatrix}
\end{align}
By using these definitions here and the material in Section II, a closed-form solution for hybrid beamforming can be derived as Algorithm 1. Here, the number of RF chains is double the least number of RF chains, i.e., $N_{\text{r}}^{\text{RF}}=2N_s$ and $N_{t}^{\text{RF}}=2KN_s$. After calculating the beamforming matrices by Algorithm 1, (23) in \cite{XWu2018} is used to  transform the scheme to the constrained case mentioned earlier.
\begin{algorithm}[H]
\DontPrintSemicolon

  \KwIn{$\mathbf{H}$}
  \For{$k=1$ to $K$}
        {
        	Calculate $\mathbf{W}_{\text{RF}}^k=\mathbf{U}_k(:,1:M_\text{MS})$ where $\mathbf{U}_k$ is defined as (\ref{app_eq1})
        }
  compute $\mathbf{W}_{\text{RF}}$ based on (\ref{app_eq3}) \\
      By using (\ref{app_eq2}), compute $\mathbf{F}_{RF}=\mathbf{V}_{\text{comp}}(:,1:N_t^{\text{RF}})$\\
    \For{$k=1$ to $K$}
        {
  Calculate the baseband channel for the $k$-th user as (\ref{bb_channel})
  }
  Compute $\mathbf{F}_{\text{BB}}$ and $\mathbf{W}_{\text{BB}}$ by using the scheme described in Section IV in \cite{XWu2018}, then normalize each column of $\mathbf{F}_{\text{BB}}$ as $\mathbf{F}_{\text{BB}}(:,i)=\frac{\text{F}_{\text{BB}}(:,i)}{||\text{F}_{\text{RF}}\text{F}_{BB}(:,i)||_F}$

\KwOut{$\mathbf{F}_{\text{RF}}, \mathbf{F}_{\text{BB}}, {\mathbf{W}_{\text{RF}}}^k,({\mathbf{W}_{\text{BB}}^k})_{k=1:K}$}
\caption{Hybrid Block Diagonalization Beamforming for multi-user Scenario}
\end{algorithm}

 \end{appendices}
\newpage
	\bibliographystyle{IEEEtran}
	\bibliography{Globecomm2019}
\end{document}